\documentclass[conference]{IEEEtran}
\IEEEoverridecommandlockouts

\usepackage{cite}
\usepackage{amsmath,amssymb,amsfonts}
\usepackage{algorithmic}
\usepackage{graphicx}
\usepackage{subcaption}  
\usepackage{textcomp}
\usepackage{xcolor}
\usepackage{authblk}
\usepackage[linesnumbered,ruled,vlined]{algorithm2e}
\SetCommentSty{small}

\usepackage{multirow}
\usepackage{array}
\usepackage{arydshln}
\usepackage{makecell}
\usepackage{booktabs}

\usepackage[bookmarks=false]{hyperref}
\usepackage{flushend}
\usepackage{enumitem}
\usepackage{booktabs}  
\usepackage{float}
\usepackage{url}
\usepackage{amsthm}
\newtheorem{theorem}{Theorem}
\newtheorem{corollary}{Corollary}[theorem]

\usepackage[all=normal,paragraphs=normal,floats=normal,mathspacing=normal,wordspacing=tight,charwidths=tight,mathdisplays=normal,leading=tight]{savetrees}

\def\BibTeX{{\rm B\kern-.05em{\sc i\kern-.025em b}\kern-.08em
    T\kern-.1667em\lower.7ex\hbox{E}\kern-.125emX}}

\begin{document}

\title{Multi-dimensional hierarchical dictionary search \\ for large MIMO-OFDM systems\\
\thanks{This work is supported by the French national research agency (MoBAIWL, grant ANR-23-CE25-0013 and RIS3, grant ANR-23-CMAS-0023)}}

\author[1]{Nay Klaimi}
\author[1]{Philippe Mary}
\author[1]{Luc Le Magoarou}

\affil[1]{Univ Rennes, INSA Rennes, CNRS, IETR-UMR 6164, Rennes, France}

\newcommand{\remNK}[1]{{\scriptsize \color{red} [Nay: #1]}}

\maketitle

\begin{abstract}
Sparse recovery algorithms are of utmost importance for estimation processes in wireless communications. However, communication systems such as massive multiple input multiple output (MIMO) systems are rapidly growing in dimension, which consequently increases the computational complexity of these algorithms. This work proposes a low-complexity strategy for the efficient implementation of the ``atom selection step" in these greedy sparse recovery algorithms, based on the structural features of these systems. A theoretical justification is presented along with tests using realistic channel data, to demonstrate the computational gain induced by the proposed approach and compare it to the classical sparse recovery approach.
\end{abstract}

\begin{IEEEkeywords}
Sparse recovery, atom selection, low-complexity, massive MIMO
\end{IEEEkeywords}

\section{Introduction}
While modern wireless networks strive for ever-higher throughput and reliability, they are hindered by the high computational cost of signal processing. For example massive multiple input multiple output (MIMO) systems \cite{Rusek2013,Larsson2014}, widely adopted to leverage large bandwidths, result in channels with hundreds to thousands of complex coefficients, making all signal processing tasks computationally intensive. In particular, sparse recovery algorithms that are now widely used for tasks such as channel estimation and beamforming \cite{Heath2016} rely on the multiplication of large matrices known as dictionaries. Consequently, as system architectures grow in scale and density, the demand for complexity-reduced variants of these algorithms increases.

In the context of large-scale systems like the MIMO Orthogonal Frequency Division Multiplexing (MIMO-OFDM) systems, specific features can be used to design low-complexity algorithms. First, the system is multi-dimensional; in the sense that it operates over several physical dimensions (several antennas at the receiver and transmitter, subcarriers), which gives its dictionary a Kronecker structure. Second, because the sensors (with uniform linear array (ULA) antennas at both ends) and subcarriers are uniformly distributed in their respective spaces, single-path channels, correspond to complex exponentials underlying the Fourier transform. In this work, these two structural properties are leveraged to propose a reduced-complexity method for the atom selection step, a standard operation in greedy sparse recovery.\\
{\noindent\bf Contributions.} This paper proposes a low-complexity multi-dimensional hierarchical atom selection to be used within greedy sparse recovery algorithms such as orthogonal matching pursuit (OMP)\cite{Pati1993}, specifically relevant for high-dimensional MIMO systems. While achieving the same results as the classical approach for noiseless measurements, and slightly degraded for noisy ones, this method substantially reduces the number of required mathematical operations typically of several orders of magnitude. More specifically, the main contributions are as follows:
\begin{itemize}[leftmargin=*, labelsep=0.5em] 
    \item An atom selection procedure that exploits the two aforementioned structural properties, the first is the Kronecker dictionary decomposition for the multi-dimensional aspect, along with the Fourier structure for the hierarchical atom search. The proposed method leverages the construction of flexible meta-atoms to avoid costly exhaustive search.
    \item A theoretical study justifying the building of the meta-atoms, relying on classical Fourier transform properties.
    \item An extensive set of experiments on realistic synthetic channels in a high-dimensional MIMO-OFDM system, validating the proposed method's computational efficiency and strong performance even when the classical approach becomes intractable due to system scale.
\end{itemize}  
{\noindent\bf Related work.} The proposed approach integrates and advances recent methods from the literature. The hierarchical search has been briefly introduced in \cite{Chatelier2023}, to accelerate dictionary learning for the matching pursuit algorithm. In contrast, this paper carries out an in-depth study of this method and provides further justifications. It offers a generic algorithm that could work for all sparse recovery algorithms and furthermore extends it to multi-dimensionality. Several works have addressed the acceleration of these algorithm in a similar spirit. A first line of research proposes a tree-based representation of the dictionary, where group representatives are computed via clustering, reducing the search complexity while maintaining reliable atom identification \cite{Jost2006,Ayremlou2014}. Another approach exploits the structure of the dictionary to achieve quasi-linear complexities\cite{Skretting2017}, while region-based search strategies have also been investigated \cite{Dorffer2019}. Moreover, the works \cite{palacios2022,Bayraktar2024,klaimi2026} introduce the multi-dimensional orthogonal matching pursuit (MOMP) algorithm. These studies use the kronecker structure of the dictionaries used for the OMP algorithm to reduce its complexity. The present work goes one step further in complexity reduction by combining MOMP with the new \emph{hierarchical search}, thus exploiting all available structures to reduce the complexity of the atom selection step. 
\section{System model}
\label{sec:syst}
A one-dimensional system is first considered for clarity in presenting the problem and the proposed solution strategy. This system may represent, for instance, a single input single output (SISO) OFDM system (frequency dimension only), a single-subcarrier MIMO or Massive MIMO system (space dimension only), etc.
In a general context, consider a one-dimensional system with $N$ sensors (subcarriers or antennas). Let $\boldsymbol{\gamma}=(\gamma_1,\dots,\gamma_N)^\mathsf{T}\in \mathbb{R}^N$ be the vector describing the positions of the sensors in the so-called observation domain. An \emph{atomic signal} expressed as $\mathbf{e}(u_k)=\mathsf{e}^{-\mathsf{j}2\pi\boldsymbol{\gamma} u_k}$ represents a single-path channel of the system, where $u_k$ is called the target parameter corresponding to the $k$-th path and belonging to the target domain $[u_{\mathsf{min}},u_{\mathsf{max}}]$.
A multi-path channel with $K$ propagation paths is a sparse linear combination of atomic signals, expressed as $\mathbf{h} = \sum_{k=1}^{K} \alpha_k \, \mathbf{e}(u_k),$
where $\alpha_k \in \mathbb{C}$ represents the complex gain of the $k$-th path.

Considering only the space dimension, the vector $\mathbf{p}=(p_1,\dots,p_N)^\mathsf{T}\in \mathbb{R}^N$ corresponds to the antenna positions and the target parameter will be the angle of departure (or the angle of arrival) $\theta$. Under the plane wave assumption, a single-path channel with a given direction corresponds to a steering vector (SV), which serves as the atomic signal in this case, expressed as
\begin{equation}
      \mathbf{e}(\theta_k) = \big(\mathrm{e}^{-\mathrm{j} 2 \pi \frac{p_1}{\lambda} \cos{\theta_k}} ,\ldots, \mathrm{e}^{-\mathrm{j} 2 \pi \frac{p_N}{\lambda} \cos{\theta_k}}\big)^\mathsf{T}\in \mathbb{C}^{N},
\end{equation}
for the $k$-th path, where $\lambda$ denotes the wavelength.

Considering only the frequency dimension, $\mathbf{f}=(f_1,\dots,f_N)^\mathsf{T}\in \mathbb{R}^N$ denotes the vector of subcarrier frequencies in the observation domain. The atomic signal, which corresponds to a frequency response vector (FRV) in this case, is
\begin{equation}
  \mathbf{e}(\tau_k) = \big(\mathrm{e}^{-\mathrm{j} 2 \pi f_1 \tau_k} ,\ldots, \mathrm{e}^{-\mathrm{j} 2 \pi f_{N} \tau_k}\big)^\mathsf{T} 
\in \mathbb{C}^{N},  
\end{equation}
corresponding to the $k$-th path with the propagation delay $\tau_k$ as a target parameter.

A SISO OFDM system with $N$ subcarriers in the uplink is adopted first as an arbitrary illustrative choice.
The multi-path channel can be expressed under this model as \begin{equation}
\label{eq:1D_multi-path_channel}
\mathbf{h} = \sum_{k=1}^{K} \alpha_l \, \mathbf{e}(\tau_k)\,.
\end{equation}

A more complex multi-dimensional system, better reflecting realistic scenarios in which the proposed approach becomes essential, will be used later for validation. It consists of a MIMO OFDM system operating over $N_S$ subcarriers, where the BS is equipped with a ULA with $N_B$ antennas and the users are also equipped with a ULA comprising $N_M$ antennas, yielding a number of elements $N=N_SN_BN_M$. The atomic signals of the three dimensions will respectively be named $\mathbf{e_S}$, $\mathbf{e_B}$ and $\mathbf{e_M}$.
The multi-path channel can be expressed under this model as \begin{equation}
\label{eq:3D_multi-path_channel}
\mathbf{h} = \sum_{k=1}^{K} \alpha_l \, \mathbf{e}_S(\tau_k)\otimes\mathbf{e}_B(\theta_k)\otimes\mathbf{e}_M(\vartheta_k)\,.
\end{equation}

Finally,  for both systems, we consider an uplink scenario in which the BS gets noisy measurements of the channels of the form 
\begin{equation}
    \mathbf{y}=\mathbf{h}+\mathbf{n}, 
\end{equation}
where $\mathbf{n}\sim \mathcal{CN}(0,\sigma^2)$ represents noise. The signal to noise ratio (SNR) can be computed as follows \[\text{SNR}\triangleq\frac{\|\mathbf{h}\|^2_2}{N\sigma^2}.\]
Such measurements can be obtained, for example, by transmitting orthogonal pilot sequences and correlating with them at the receiver; they will also be referred to as \emph{observations}.
\section{Problem Statement}
\label{sec:pb_statement}
The considered problem is recovering a signal $\mathbf{h} \in \mathbb{C}^N$  from a noisy measurement $\mathbf{y}$. The objective is either to denoise the received measurement (e.g. channel estimation) or to estimate physical parameters from it (e.g. AoA for localization). To do so, sparse recovery algorithms (e.g. MP \cite{Mallat1993} and OMP \cite{Pati1993}) rather consider estimation of the signal by a sparse approximation of the observation, in some dictionary defined as \( \mathbf{D}=\{\mathbf{a}_i\in \mathbb{C}^N\}^A_{i=1}  \),
where $\mathbf{a}$ is the so-called \emph{atom}, and $A$ the number of atoms in $\mathbf{D}$. The objective is to find a sparse representation $\boldsymbol{\beta}\in \mathbb{C}^A$ such that \(\mathbf{y}\approx\mathbf{D}\boldsymbol{\beta}\), with a sparsity level $s$; $\|\boldsymbol{\beta}\|_0=s$. In classical sparse recovery, the atoms take the form of atomic signals such that \begin{equation}
    \mathbf{D}=\{\mathbf{e}(\tau_i)\in \mathbb{C}^N\}^A_{i=1},
\end{equation}
where the $\tau_i$ values are evenly spaced over the target domain (delay domain here).
In order to reconstruct the signal, greedy sparse recovery algorithms search for the atoms of the dictionary that are most correlated with the signal and then project the observation onto the corresponding subspace, hence recovering it after several iterations. This critically relies on the atom selection step that is expressed as: \begin{equation}
\label{eq:opt_pb}
    \operatorname{Find} \mathbf{a}_\text{max} \in \arg\max_{\mathbf{a}_i}{|\langle \mathbf{a}_i, \boldsymbol{\epsilon} \rangle|}
\end{equation}
where $\boldsymbol{\epsilon}$ represents the residual of the algorithm initialized as $\boldsymbol{\epsilon}=\mathbf{y}$ and $\langle \mathbf{a}_i,\boldsymbol{\epsilon}\rangle=\mathbf{a}_i^\mathsf{H}\boldsymbol{\epsilon}$ represents the correlation (inner product) of the atom $\mathbf{a}_i$ with $\boldsymbol{\epsilon}$.

Let us define the \emph{response} of an atom to an atomic signal as their correlation, as a function of the target parameter $\tau$: \begin{equation}
    \mathbf{r}_i(\tau)=\mathbf{a}_i^\mathsf{H}\mathbf{e}(\tau)=\mathbf{e}(\tau_i)^\mathsf{H}\mathbf{e}(\tau).
\end{equation} 
Classically, the atom $\mathbf{a}_i$ will only have a non-negligible response when the delay parameter $\tau$ of the atomic signal is around $\tau_i$, as illustrated in Fig. \ref{fig:atom_response}.
\begin{figure}[H]
\centering
    \includegraphics[width=\columnwidth]{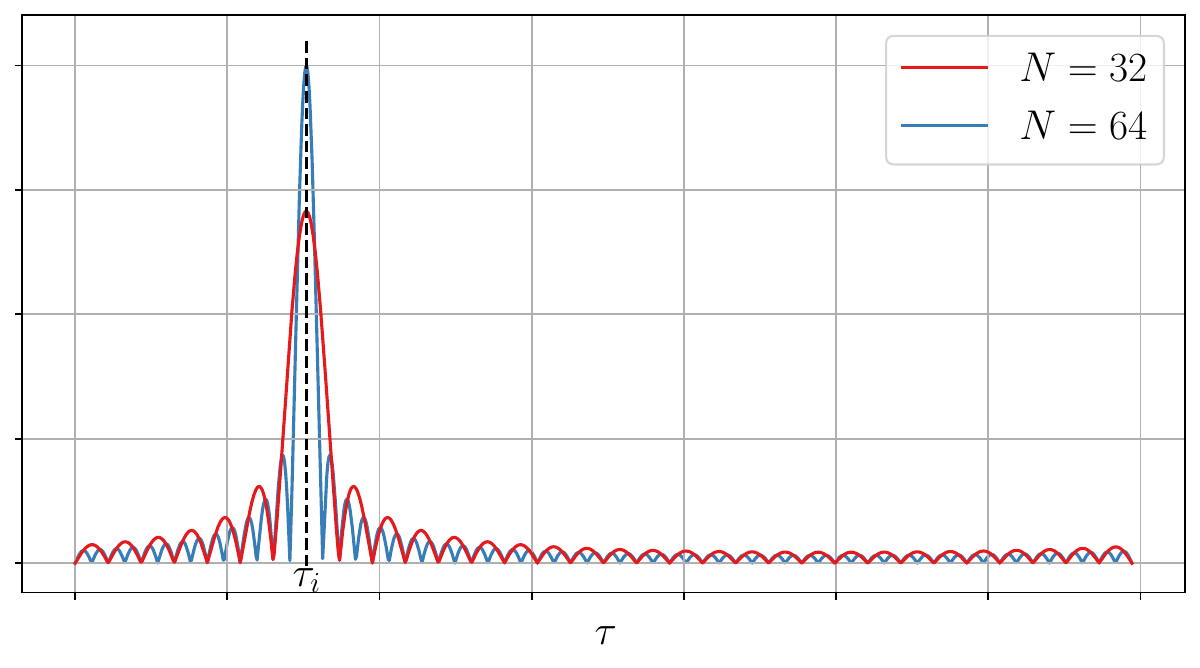}
    \caption{Amplitude of the response of the atom $\mathbf{e}(\tau_i)$ to atomic signals of the form $\mathbf{e}(\tau)$, as a function of $\tau$.}
    \label{fig:atom_response}
\end{figure}
Classical approaches thus require a high resolution dictionary to perform accurate atom selection (that is, a large number of atoms to cover all the target domain), as the response is strictly localized around each atom's delay. However, as the system dimensions grow, the response becomes more localized (as illustrated in Fig.\ref{fig:atom_response}) therefore, the number of atoms needed to achieve reliable sparse recovery becomes prohibitively large.

\section{Proposed Strategy}
\label{sec:prpsd_strategy}
This paper proposes a reduced complexity strategy for the atom selection step explained in Section \ref{sec:pb_statement}. To efficiently solve \eqref{eq:opt_pb}, the classical exhaustive search can be replaced with a \emph{hierarchical search}; iteratively searching the atom domain using rree-based search logic, while maintaining the same resolution. This search consists in introducing a new structure called \emph{meta-atom} that allows spreading the response of an atom over a controlled range of the target domain. These concepts will be explained in details in this section.
 
 \subsection{Meta-atoms contruction}
The meta-atom structure is obtained by modulating the classical atom $\mathbf{a}_i = \mathbf{e}(\tau_i)$ to spread its response. This is made possible by the Fourier property of the atomic signals, as established in Theorem~\ref{th:metaatoms} and Corollary~\ref{corr:metaatoms}.
 
\begin{theorem}
\label{th:metaatoms}
Computing the response of an atom \(\mathbf{a}\) to an atomic signal \( \mathbf{e}(u) \), is equivalent to computing the Fourier transform of \( \mathbf{a} \) evaluated at \(u\).
\end{theorem}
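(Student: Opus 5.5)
The plan is to write the response out explicitly as a finite sum and recognize it as a Fourier transform. Take an arbitrary atom $\mathbf{a}=(a_1,\dots,a_N)^\mathsf{T}\in\mathbb{C}^N$ attached to the sensor positions $\boldsymbol{\gamma}=(\gamma_1,\dots,\gamma_N)^\mathsf{T}$ of Section~\ref{sec:syst}. Using $\mathbf{e}(u)=\mathsf{e}^{-\mathsf{j}2\pi\boldsymbol{\gamma}u}$ and the inner product $\langle\mathbf{a},\mathbf{e}(u)\rangle=\mathbf{a}^\mathsf{H}\mathbf{e}(u)$, we get
\begin{equation*}
r(u)=\mathbf{a}^\mathsf{H}\mathbf{e}(u)=\sum_{n=1}^{N} a_n^{*}\,\mathsf{e}^{-\mathsf{j}2\pi\gamma_n u}.
\end{equation*}

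Next, view the conjugated atom as a discrete signal in the observation domain. Associate with $\mathbf{a}$ the measure (a sum of Dirac masses)
\begin{equation*}
s_{\mathbf{a}}(\gamma)=\sum_{n=1}^N a_n^{*}\,\delta(\gamma-\gamma_n),
\end{equation*}
placed at the sensor positions. Its Fourier transform, in the convention $\mathcal{F}\{s\}(u)=\int s(\gamma)\mathsf{e}^{-\mathsf{j}2\pi\gamma u}\,\mathrm{d}\gamma$, is exactly the sum above. Hence $r(u)=\mathcal{F}\{s_{\mathbf{a}}\}(u)$ for every $u$ in the target domain.

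For uniformly spaced sensors, $\gamma_n=\gamma_1+(n-1)\Delta$, the same identity can be stated as a discrete-time Fourier transform. Factoring out the phase gives
\begin{equation*}
r(u)=\mathsf{e}^{-\mathsf{j}2\pi\gamma_1u}\sum_{n}a_n^{*}\,\mathsf{e}^{-\mathsf{j}2\pi (n-1)\Delta u},
\end{equation*}
which is the DTFT of the sequence $(a_n^{*})$ evaluated at normalized frequency $\Delta u$, up to a unit-modulus phase. Since only $|r(u)|$ matters in \eqref{eq:opt_pb}, this phase is irrelevant. This is also the form the meta-atom construction will rely on, because modulating or windowing $\mathbf{a}$ then acts on its response through the usual Fourier rules: shift/modulation, and convolution/product.

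The main obstacle is not the computation, which is immediate, but fixing conventions so that the statement is literally correct. There are three points. First, the complex conjugation from the Hermitian inner product: either the transform is applied to $\mathbf{a}^{*}$, or one equivalently takes the conjugate of the transform of $\mathbf{a}$ at $-u$. Second, the sign of the exponent. Third, the phase factor $\gamma_1$ in the uniform case. I will state the theorem with respect to the discrete-measure transform of $\mathbf{a}^{*}$, and note that the magnitude $|r(u)|$ is invariant under these convention choices. That magnitude is all that atom selection uses.
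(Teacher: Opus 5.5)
Your proof is correct and is essentially the paper's. The paper also expands $\langle \mathbf{a},\mathbf{e}(u)\rangle$ as $\sum_n a^\ast(\gamma_n)\,\mathrm{e}^{-\mathrm{j}2\pi\gamma_n u}$ and uses the sifting property to identify it with the Fourier transform of $a^\ast(\gamma)\,\Pi_B(\gamma-\gamma_0)\,\operatorname{III}_{\Delta\gamma}(\gamma)$, which is your measure $s_{\mathbf{a}}$ specialized to uniformly spaced sensors. The differences are minor: the paper keeps the window-times-comb form because its Corollary uses it, you allow arbitrary sensor positions, and both arguments, as you point out, actually establish the transform of $\mathbf{a}^\ast$ rather than of $\mathbf{a}$.
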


\begin{proof}
Let a \emph{ continuous atomic signal} be a pure complex exponential of the form 
\[
s_u(\gamma) = \mathrm{e}^{-\mathrm{j} 2\pi\gamma u},
\]
with \(\gamma \in \mathbb{R}\) and \(u \in \mathbb{R}\).
The signal is then windowed and sampled to obtain a discrete representation as follows \[s_u(\gamma_n)=\mathrm{e}^{- \mathrm{j}2\pi\gamma u}\,\Pi_B(\gamma-\gamma_0)\,\operatorname{III}_{\Delta \gamma}(\gamma),\]
where the windowing is performed using the rectangular function \[\Pi_B(\gamma) =\begin{cases}
1 & \text{if } |\gamma| < \frac{B}{2} \\[6pt]
0 & \text{elsewhere}.
\end{cases},\] and the sampling is modeled by the Dirac comb \(
\operatorname{III}_{\Delta\gamma}(\gamma) \;\triangleq\; \sum_{n=-\infty}^{\infty} \delta(\gamma - n\Delta\gamma),\) with $\delta(\cdot)$ denoting the Dirac delta distribution. This yields the following discrete set $\gamma_n \in \left\{ \gamma_0 + n\,\Delta\gamma \;\middle|\;
n = -\frac{B}{2}, -\frac{B}{2}+1, \dots, \frac{B}{2} \right\}$.

\noindent The discretized version of $s_u(\gamma)$ viewed as a function of the target parameter $u$ corresponds to the atomic signal $\mathbf{e}(u)$ (which can be a steering vector or a frequency response vector depending on the system).

\noindent Let an \emph{atom} be an arbitrary continuous signal \( a(\gamma) \) discretized in the same manner so it corresponds to the atom $\mathbf{a}$.

\noindent The response of \(\mathbf{a}\) to \( \mathbf{e}(u)\) is
\begin{align*}
r(u)=\langle \mathbf{a}, \mathbf{e}(u) \rangle &= \sum_n a^\ast(\gamma_n)\, s_u(\gamma_n)\\
&\overset{p}{=} \int_\gamma a^\ast(\gamma)\, \mathrm{e}^{-\mathrm{j}2\pi \gamma u}\,\Pi_B(\gamma-\gamma_0)\,\operatorname{III}_{\Delta \gamma}(\gamma)\,\mathrm{d}\gamma\\
&=\operatorname{FT}\{a^\ast(\gamma)\,\Pi_B(\gamma-\gamma_0)\,\operatorname{III}_{\Delta \gamma}(\gamma)\}(u),
\end{align*}
where $p$ indicates the sifting property \cite{Bracewell2000_Ch5} of the Dirac delta, which justifies the equality.
Consequently, this equality is precisely the definition of the Fourier transform of \( \mathbf{a}^\ast\) evaluated at \(u\). 
\end{proof}
This theorem thus allows us to build meta-atoms as follows, in Corollary \ref{corr:metaatoms}.

\begin{corollary}
\label{corr:metaatoms}
Modulating an atom $\mathbf{a}$ by a sinc function yields a response which is the convolution of the response $r(u)=\langle \mathbf{a}, \mathbf{e}(u) \rangle$ with a rectangular function.

\end{corollary}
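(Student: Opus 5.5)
The plan is to apply Theorem~\ref{th:metaatoms} and then use the convolution theorem. By Theorem~\ref{th:metaatoms}, the response of any atom to $\mathbf{e}(u)$ is the Fourier transform of the sampled and windowed conjugated atom, evaluated at $u$:
\[
r(u)=\operatorname{FT}\{a^\ast(\gamma)\,\Pi_B(\gamma-\gamma_0)\,\operatorname{III}_{\Delta \gamma}(\gamma)\}(u).
\]
A modulated atom $\tilde{\mathbf{a}}$ is obtained by multiplying entrywise the samples of $a(\gamma)$ by the samples of a sinc function. Concretely, set $\tilde a(\gamma)=a(\gamma)\,g(\gamma)$ with
\[
g(\gamma)=W\operatorname{sinc}(W\gamma)\,\mathrm{e}^{\mathrm{j}2\pi\gamma u_c},
\]
where $W$ sets the width of the spread and $u_c$ its center. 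The sinc can be taken real, so conjugation only flips the sign of the modulation frequency. Applying Theorem~\ref{th:metaatoms} to $\tilde{\mathbf{a}}$ gives
\[
\tilde r(u)=\operatorname{FT}\{g^\ast(\gamma)\,a^\ast(\gamma)\,\Pi_B(\gamma-\gamma_0)\,\operatorname{III}_{\Delta\gamma}(\gamma)\}(u).
\]
Both factors in the transform are defined on the whole real line, because the window and the Dirac comb are already included in the second factor.

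The key step is the convolution theorem: the Fourier transform of a product equals the convolution of the Fourier transforms. This gives
\[
\tilde r(u)=\big(\operatorname{FT}\{g^\ast\}\ast r\big)(u).
\]
Next, $\operatorname{FT}\{W\operatorname{sinc}(W\gamma)\}=\Pi_W(u)$ with the paper's sign convention $\mathrm{e}^{-\mathrm{j}2\pi\gamma u}$. The sinc/rect pair is symmetric, so the sign convention does not matter. The modulation factor shifts this rectangle to be centered at $u_c$ by the shift theorem, after tracking the sign change introduced by conjugation. Therefore
\[
\tilde r(u)=(\Pi_W(\cdot-u_c)\ast r)(u)=\int_{u_c-W/2}^{u_c+W/2} r(u-v)\,\mathrm{d}v.
\]
This is the response $r$ averaged over a controlled interval of the target domain, which is exactly the claimed spreading.

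I expect the main obstacle to be rigor rather than computation. The sampled signal is a distribution (a sum of Diracs), and the sinc does not have compact support. I would handle this by noting that $a^\ast\,\Pi_B\,\operatorname{III}_{\Delta\gamma}$ is a finite sum of Diracs, so its transform is a trigonometric polynomial. Multiplying by the smooth function $g$ only evaluates $g$ at the finitely many sample points $\gamma_n$, so the product stays a finite Dirac sum. The convolution identity can then be checked directly by exchanging the finite sum with the integral defining $\Pi_W\ast r$. This uses $\int \Pi_W(v)\,\mathrm{e}^{-\mathrm{j}2\pi\gamma_n(u-v)}\mathrm{d}v=\mathrm{e}^{-\mathrm{j}2\pi\gamma_n u}\,W\operatorname{sinc}(W\gamma_n)$, up to the modulation term. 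This direct check also confirms the normalization constants and the sign of $u_c$.
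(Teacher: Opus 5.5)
Your argument is correct and follows the paper's route. Theorem~\ref{th:metaatoms} turns the response into a Fourier transform, and the convolution theorem with the pair $\operatorname{FT}\{W\operatorname{sinc}(W\gamma)\}=\Pi_W$ then gives the rectangular convolution. The paper does this only for $a_i(\gamma)=\mathrm{e}^{-\mathrm{j}2\pi\gamma u_i}$ and with an unnormalized $\operatorname{sinc}(L\gamma)$, hence its factor $\tfrac{1}{L}$. Your finite-sum check is a rigorous supplement that avoids the distributional manipulations the paper leaves implicit.

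One slip is in your optional re-centering. With $g(\gamma)=W\operatorname{sinc}(W\gamma)\,\mathrm{e}^{\mathrm{j}2\pi\gamma u_c}$ you get $\operatorname{FT}\{g^\ast\}(u)=\Pi_W(u+u_c)$, so the rectangle is centered at $-u_c$, not $u_c$. To center it at $u_c$, modulate by $\mathrm{e}^{-\mathrm{j}2\pi\gamma u_c}$, i.e.\ by $\mathbf{e}(u_c)$ as in Algorithm~\ref{algo:Hsearch}. This does not affect the statement itself.
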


\begin{proof}
Let $a_i(\gamma)=\mathrm{e}^{-\mathrm{j}2\pi\gamma u_i}$ be an atomic signal, as is standard in sparse recovery methods. Then by applying Theorem \ref{th:metaatoms}, the response of this atom is given by
\begin{align*}
r_i(u)&=\operatorname{FT}\{\mathrm{e}^{\mathrm{j}2\pi\gamma u_i}\,\Pi_B(\gamma-\gamma_0)\,\operatorname{III}_{\Delta \gamma}(\gamma)\}(u)\\&=\delta(u-u_i) \circledast\, B \operatorname{sinc}\,(Bu)\mathrm{e}^{-j2\pi\gamma_0 u}\,\circledast\,\tfrac{1}{\Delta\gamma}\operatorname{III}_{\tfrac{1}{\Delta\gamma}}(u),
\end{align*}
where $\circledast$ denotes convolution. The resulting response is sinc-shaped with a principal lobe centered at $u_i$.

Now, if we modulate $a_i(\gamma)$ such that
\[
\hat{a}_i(\gamma) = a_i(\gamma) \,\operatorname{sinc}\,(L\gamma),
\]
the resulting correlation response becomes
\begin{align*}
\hat{r}_i(u) &= \operatorname{FT}\{\operatorname{sinc}\,(L\gamma)\,\mathrm{e}^{\mathrm{j}2\pi\gamma u_i}\,\Pi_B(\gamma-\gamma_0)\,\operatorname{III}_{\Delta \gamma}(\gamma)\}(u) \\
&=\tfrac{1}{L}\Pi_L(u-u_i) \circledast\, B \operatorname{sinc}\,(Bu)\mathrm{e}^{-\mathrm{j}2\pi\gamma_0 u}\,\circledast\,\tfrac{1}{\Delta\gamma}\operatorname{III}_{\tfrac{1}{\Delta\gamma}}(u)\\ 
&= \tfrac{1}{L}\,\Pi_L(u) \circledast r_i(u),   
\end{align*}
which is a rectangular-shaped response of width $L$ centered at $u_i$.
\end{proof}
Corollary \ref{corr:metaatoms} shows that by modulating the classical atom $\mathbf{a}_i=\mathbf{e}(\tau_i)$ with a well defined sinc function we get a meta-atom having a spread response of width $L$ \[\hat{\mathbf{a}}_i=\mathbf{e}(\tau_i)\odot\operatorname{sinc}(L\mathbf{f}),\] where $\odot$ denotes the element-wise (Hadamard) product ($[\mathbf{v} \odot \mathbf{w}]_j = v_j\, w_j$). The resulting structure constitutes the basis of the proposed method named the \emph{hierarchical search}, which will be explained in the following. 

\subsection{Hierarchical search}
This section explains the hierarchical approach and compares it with the classical one. As in the SISO-OFDM scenario, the classical atoms take the form of an FRV, parameterized with a propagation delay. Chosen this way, each atom's response only covers a small interval around one delay in the domain (the atom $\mathbf{a}_i=\frac{1}{\sqrt{N}}\mathrm{e}^{-\mathrm{j}2\pi\mathbf{f}\tau_i}$ has a non-negligible response only around $\tau_i$). Let $A$ be the number of atoms in the dictionary. The set of these atoms (the entire dictionary) must span the entire delay domain over which the search is conducted; hence, the more atoms there are, the higher the resolution. The correlation response is computed in one step ($A$ correlations are computed), and the atom that maximizes it is chosen. 

The hierarchical approach (summarized in Algorithm \ref{algo:Hsearch}) proposes to compute the correlation response iteratively (several steps), by constructing $n$ \emph{meta-atoms} per step, each meta-atom's correlation response covers a wide range of the delay domain, such that the set of meta-atoms covers the entire delay domain, e.g. for $n=2$, the first step consists of constructing $2$ meta-atoms each covering half of the delay domain. In each step, the meta-atom that maximizes the correlation with the signal will be chosen, consequently eliminating the delay range covered by the other meta-atoms' responses. The same search will then be conducted only in the chosen delay range (covered by the chosen meta-atom's response), and so on. This approach reduces the number of correlations from  $A$ to $n\log_nA$.

\begin{algorithm}[h]
\caption{$\mathtt{HSearch}(\boldsymbol{\epsilon}, \Delta u,\boldsymbol{\gamma}, n, S)$}

\label{algo:Hsearch}

\KwIn{Residual $\boldsymbol{\epsilon}$, target domain length $\Delta u$, observation domain $\boldsymbol{\gamma}$, branching factor $n$, number of steps $S$}

\tcp{Initialize}
$L \gets \Delta u / n$ \tcp*{meta-atom width}
$\mathbf{u} \gets \left( \frac{(2k-1)L}{2} \;\middle|\; k = 1,\dots,n \right)$ \tcp*{meta-atom centers}

\Repeat{$S$ steps}{
    \tcp{Construct meta-atom dictionary}
    $\mathbf{M} \gets \left( \mathbf{e}(u_i)\odot\operatorname{sinc}(L\boldsymbol{\gamma}) \;\middle|\; u_i \in \mathbf{u} \right)$\;

    \tcp{Compute correlation}
    $\mathbf{c} \gets  \mathbf{M}^\mathsf{H} \boldsymbol{\epsilon} $\;

    $j^\star \gets \arg\max_j \; \left|{c}_j\right|$\;
    $u^\star \gets \mathbf{u}[j^\star]$\;

    \tcp{Update}
    $L \gets L / n$\;
    $\mathbf{u} \gets \left( u^\star + \frac{(2k-1-n)}{2}L \mid k = 1,\dots,n \right)$\;
}
\KwOut{Estimate $u^\star$, maximum correlation $\mathbf{c}[j^\star]$}
\end{algorithm}

\begin{figure}[h]
    \centering
    \begin{subfigure}{\columnwidth}
        \centering
        \includegraphics[width=\linewidth]{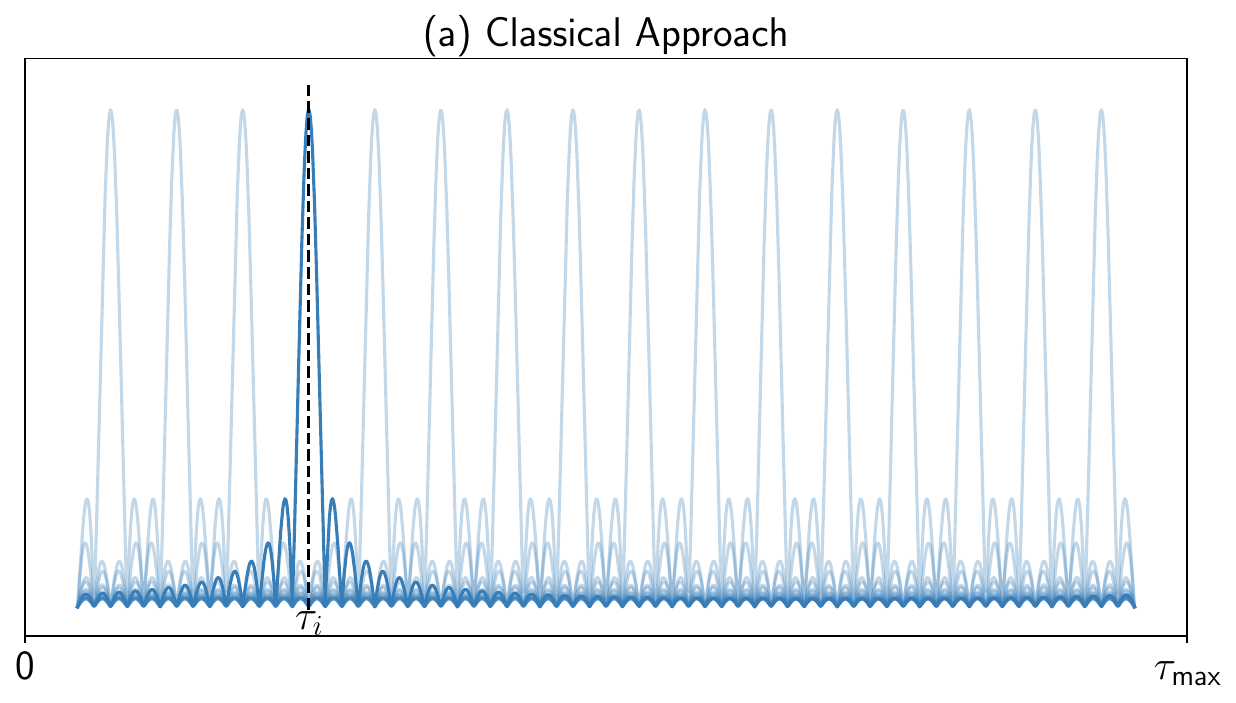}
        \vspace{-1.5em}
    \end{subfigure}
    \vspace{-0.5em}
    \begin{subfigure}{\columnwidth}
        \centering
        \includegraphics[width=\linewidth]{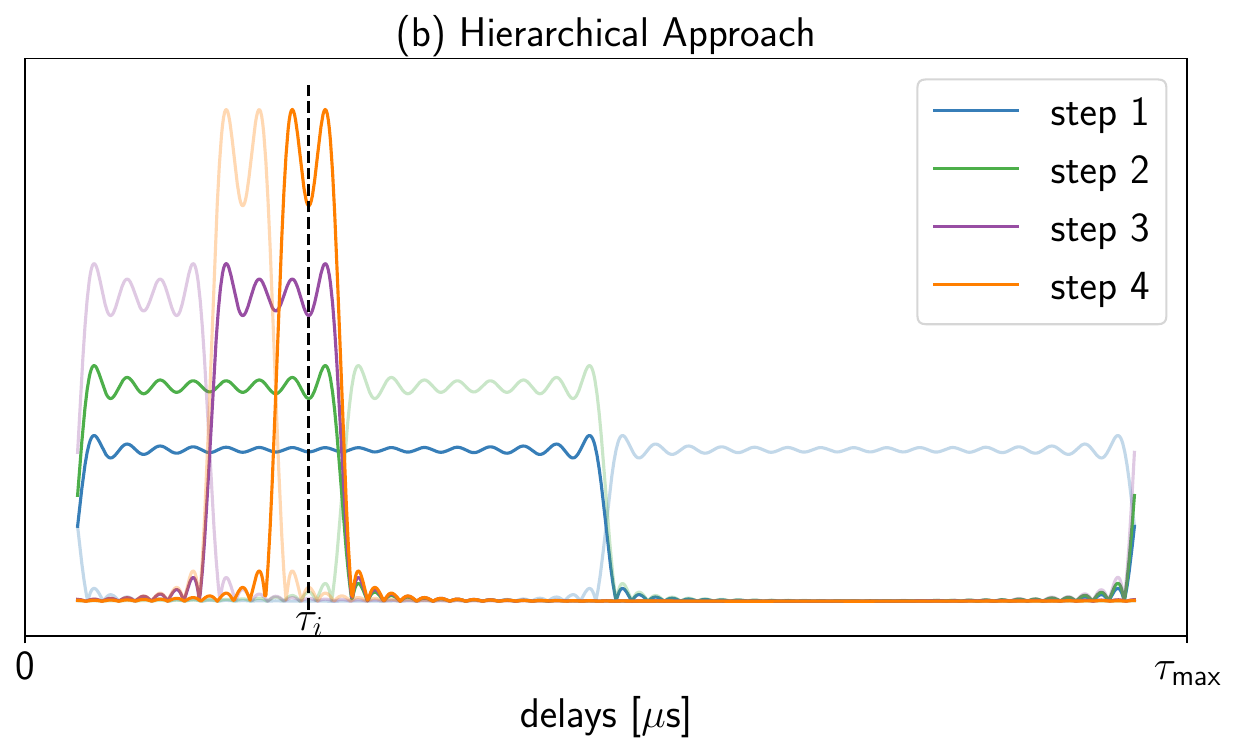}
    \end{subfigure}
    \caption{Classical vs. hierarchical search illustration}

    \label{fig:Approaches}
\end{figure}
The figure \ref{fig:Approaches} illustrates the difference between the two approaches. Let the signal to recover be expressed as $\alpha_i\mathbf{e}(\tau_i)$. For correct recovery, the methods must be able to identify the corresponding delay $\tau_i$. The classical approach (Fig.\ref{fig:Approaches}.a) correlates this signal with the dictionary $\mathbf{D}=\{\mathbf{a_j}\}^{16}_{j=1}$ such that $\mathbf{D}$ spans the entire delay domain ($16$ correlations conducted). The response peaks around $\tau_i$ (line in bold); hence the selected atom is $\mathbf{a}_i$. On the other hand, Fig.\ref{fig:Approaches}.b illustrates the hierarchical approach with $n=2$. The search is conducted within $4$ steps. In each step, the signal is correlated with $2$ meta-atoms and the one having the highest response is maintained (in bold) ($8$ correlations conducted). In the final step, the center of the selected meta-atom corresponds to $\tau_i$.

\subsection{Multi-dimensional Hierarchical OMP}
In this section, we will consider OMP as an example of a sparse recovery algorithm. As for all other algorithms, the hierarchical search can be integrated simply by replacing the classical atom selection with the hierarchical one from Algoritm \ref{algo:Hsearch}. 
For a one-dimensional system, passing from OMP to Hierarchical OMP (HOMP) yields significant complexity gains (see Table \ref{tab:complexity}).
Let us focus now on the multi-dimensional system; a three-dimensional system is considered for illustration and described in section \ref{sec:syst}. The observation is of dimension $\mathbf{y \in \mathbb{C}^{N_B N_M N_S}}$ and the classical dictionary $\mathbf{D}\in \mathbb{C}^{N_B N_M N_S \times A_B A_M A_S}$. Multi-dimensional OMP (MOMP), introduced in \cite{palacios2022,klaimi2026}, allows to treat each dimension of the system independently by viewing the large dictionary $\mathbf{D}$ as the Kronecker product of three smaller ones $\mathbf{D}_B\in \mathbb{C}^{N_B  \times A_B}$, $\mathbf{D}_M\in \mathbb{C}^{N_M  \times A_M}$ and $\mathbf{D}_S\in \mathbb{C}^{N_S \times A_S}$. From this point, atom selection in each dimension is classically performed in MOMP and can be replaced by the proposed hierarchical approach to achieve further complexity gains, particularly as each dimension grows large. The resulting algorithm is called Multi-dimentional Hierarchical OMP (MHOMP) and is summarized in Algorithm \ref{algo:MHOMP}.

\begin{algorithm}[h]
\caption{MHOMP (high level overview)}
\label{algo:MHOMP}
\KwIn{Channel observation $\mathbf{y}$, dictionaries $\{\mathbf{D}_d\}_{d=1}^3$}

Initialize $\boldsymbol{\epsilon} \gets \mathbf{y}$, $\mathbf{D}_\mathsf{active} \gets [\,]$\;

\Repeat{stopping criterion}{
    $u_1, \mathbf{c}_1=\mathtt{HSearch}(\boldsymbol{\epsilon}, \Delta{u_1}, \boldsymbol{\gamma}_1, n, S_1)$ (Algo.\ref{algo:Hsearch})\;
    $u_2, \mathbf{c}_2=\mathtt{HSearch}(\mathbf{c}_1, \Delta{u_2}, \boldsymbol{\gamma}_2, n, S_2)$ \;
    $u_3, c_3=\mathtt{HSearch}(\mathbf{c}_2, \Delta{u_3}, \boldsymbol{\gamma}_3, n, S_3)$ \;
    
    Append atom: $\mathbf{e}_1(u_1) \otimes \mathbf{e}_2(u_2) \otimes \mathbf{e}_3(u_3)$ to $\mathbf{D}_\mathsf{active}$\;
    
    $\mathbf{x}^\star \gets \arg\min_{\mathbf{x}} \|\mathbf{y} - \mathbf{D}_\mathsf{active} \mathbf{x}\|_2$\;
    
    Update the residual: $\boldsymbol{\epsilon} \gets \mathbf{y} - \mathbf{D}_\mathsf{active} \mathbf{x}^\star$\;
}
\KwOut{Denoised channel $\hat{\mathbf{h}} \gets\mathbf{y}-\boldsymbol{\epsilon}$}

\end{algorithm}
Table \ref{tab:complexity} compares the computational complexity of all the discussed atom selection methods, and highlights the relationship between the number of correlations and the number of multiplications, which is standardly used as the theoretical complexity measure. This table demonstrates that hierarchical search is significantly less complex than its classical counterpart, both for one-dimensional systems, and mostly for multidimensional systems when combined with the multidimensional search framework, yielding substantial complexity gains.

\begin{table*}[h]
    \centering
    \begin{tabular}{ccccc}
        \toprule
        & Atom selection method & Number of correlations & Number of multiplications (complexity)  \\
        \midrule
        \multirow{2}{*}{1D}
            & Classical  & $A$ & $\mathcal{O}(NA)$  \\
        \cmidrule(l){2-5}
            & Hierarchical  & $n\log_n(A)$ & $\mathcal{O}\left(Nn\log_n(A)\right)$ \\
        \midrule
        \multirow{4}{*}{3D} 
            & Classical & $A_1A_2A_3$ & $\mathcal{O}(N_1N_2N_3A_1A_2A_3)$ \\
        \cmidrule(l){2-5}
            & Multi-dimentional classical & $A_1+A_2+A_3$ & $\mathcal{O}(A_1N_1N_2N_3+A_2N_2N_3+A_3N_3)$ \\
        \cmidrule(l){2-5}
            & Multi-dimentional hierarchical & $n\log_n(A_1)+n\log_n(A_2)+n\log_n(A_3)$& $\mathcal{O}\left(n\log_n(A_1)N_1N_2N_3+n\log_n(A_2)N_2N_3+n\log_n(A_3)N_3\right)$ \\
        \bottomrule
    \end{tabular}
    \caption{Complexity comparison}\label{tab:complexity}
\end{table*}

\section{Experiments}
{\noindent\bf Settings.} The proposed method is evaluated over two series of experiments: i) a one-dimensional system and ii) a large scale multi-dimensional system using realistic synthetic channels generated with the Sionna ray-tracing simulator \cite{sionna}.
The carrier frequency is of $f = 28\,\text{GHz}$, the subcarrier spacing is set to $120\,\text{kHz}$. A pilot subcarrier is inserted every $12$ subcarriers, resulting in an effective pilot spacing of $\Delta f = 120 \times 12\,\text{kHz} = 1.44\,\text{MHz}$, and a total of $N_S = 256$ pilot subcarriers is considered. 
The first setup comprises a single antenna for both transmitter and receiver, whereas the second comprises a BS equipped with a ULA of $N_B = 64$ antennas, and UE equipped with a ULA of $N_M = 32$ antennas. In the end, this configuration yields channels of dimension $N_B N_M N_S = 524,288$. Finally, $\mathcal{B}=1000$ independent samples are collected for both setups.
To reflect realistic operating conditions, the SNR varies across channel realizations, and an average SNR of 10~dB is considered.
\subsection{One-dimensional setup}
In this setup, the channels are synthesized manually using \eqref{eq:1D_multi-path_channel}, by randomly choosing $\alpha_l \sim \mathcal{CN}(0,1)$ and $\tau_k \sim \mathcal{U}(0,\tau_{\text{max}})$ with $\tau_{\text{max}}=\frac{1}{\Delta{f}}\simeq 7\cdot10^{-7}$ for all samples.
{\noindent\bf User localization.} Fig.\ref{fig:HOMP_delayest} evaluates delay estimation using the classical and hierarchical approaches for single-path channels, as a function of the number of multiplications. The dimension of the system $N_S$ is kept constant, the number of multiplications increases for the classical search as the dictionary size $A$ increases, and for the hierarchical search as the number of steps $S$ increases (see Table~\ref{tab:complexity}). Each point on the purple curve corresponds to a value of $S$, and each point on the orange curve corresponds to a value of $A$, related by $A=2^{S}$. The estimation is evaluated using the mean absolute error $\mathrm{MAE}=\frac{1}{\mathcal{B}}\sum_{j=1}^{\mathcal{B}}\lvert \hat{\tau}_j-\tau_j\rvert$, averaged over all observations. The results show that the hierarchical search achieves near-maximum performance while cutting the number of multiplications by at least a factor of 30.\\
\begin{figure}[h]
    \centering
    \includegraphics[width=0.9\columnwidth]{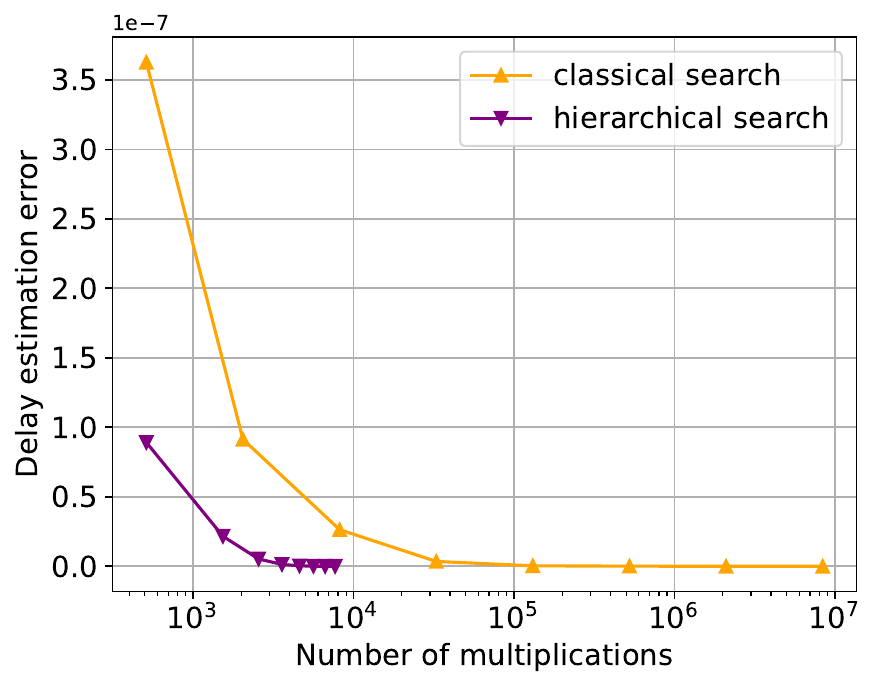}
    \caption{Delay estimation performance for single-path observations at 10~dB SNR for a 1D system}
    \label{fig:HOMP_delayest}
\end{figure}
{\noindent\bf Channel estimation.} The results in Fig.\ref{fig:HOMP_nmse} shows the the normalized mean squared error (NMSE) on the channel estimation obtained with classical OMP algorithm and with HOMP. The channels considered have 3 paths and the NMSE is defined as $\text{NMSE} = \frac{1}{\mathcal{B}}\sum_{j=1}^{\mathcal{B}}{\frac{\|\hat{\mathbf{h}}_j - \mathbf{h}_j\|_2^2}{\|\mathbf{h}_j\|_2^2}}$. It is shown that HOMP achieves channel estimation with more than a 30-fold complexity reduction compared to OMP.\\
\begin{figure}[h]
    \includegraphics[width=0.9\columnwidth]{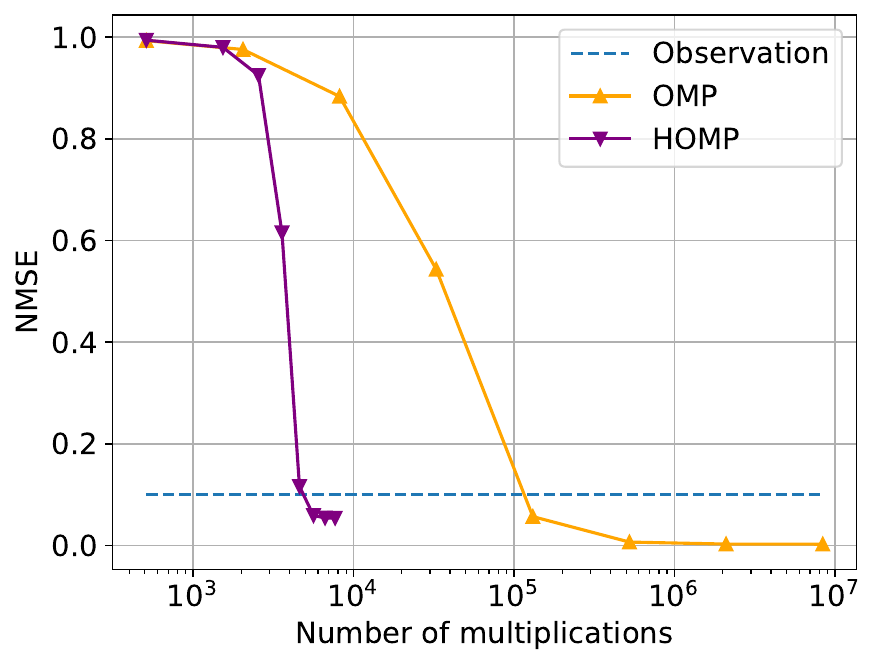}
    \caption{Channel estimation performance on multi-path observations at 10~dB SNR for a 1D system }
    \label{fig:HOMP_nmse}
\end{figure}
Hierarchical search is shown to be effective in achieving massive complexity gains with tolerable performance degradation as the number of paths. However, the scenarios considered so far are of relatively small scale, where the cost of the classical approach remains feasible. The following set of experiences are done on a much larger system in which the classical approach becomes computationally prohibitive.
\subsection{Multi-dimensional setup}
For this setup, realistic synthetic channels are generated using Sionna for the Paris Étoile scenario. The results in Fig.\ref{fig:MHOMP_nmse} show the NMSE on the channel estimation obtained with the methods: classical OMP, MOMP, and MHOMP. The first, being the least suited for large systems, shows very poor performance for fewer than $10^{10}$ multiplications.  Note that, at this system scale, the dictionary required for OMP to achieve satisfactory performance would be of size $524,288\times524,288,000$, corresponding to more than $2\cdot10^{14}$ multiplications, which is computationally prohibitive on standard hardware. This is why the results are shown only up to $10^{10}$ multiplications. MOMP achieves a good complexity reduction with respect to OMP and succeeds in properly estimating the channel at around $10^{10}$  multiplications, which remains a non-negligible cost. MHOMP demonstrates its full potential as each system dimension grows large, achieving almost perfect channel estimation with more than 100-fold complexity reduction.
\begin{figure}[h]
    \includegraphics[width=0.9\columnwidth]{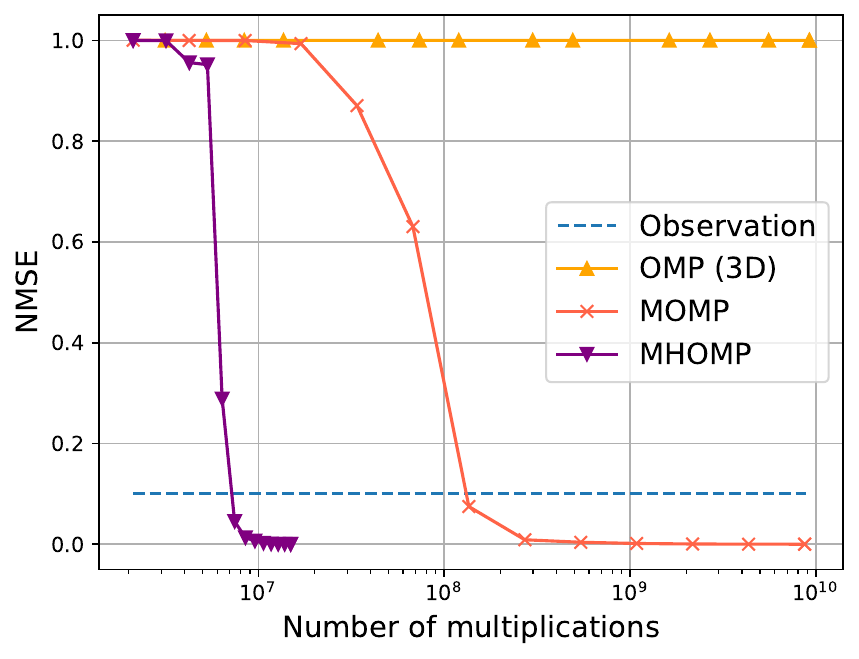}
    \caption{Channel estimation performance on realistic synthetic multi-path observations at 10~dB SNR for a 3D system}
    \label{fig:MHOMP_nmse}
\end{figure}
\section{Conclusion}
This paper introduced a low complexity variant for sparse recovery algorithms, specifically their atom selection step, used for massive MIMO systems. It builds on two key properties of these systems: Fourier-structured atoms, from which a hierarchical atom selection is derived, and Kronecker-structured dictionaries, from which a multidimensional atom selection is derived. The Fourier atomic structure property is formally established and justified. Exploiting jointly the two structures, the proposed hierarchical multidimensional search is shown to achieve a substantial reduction in complexity. The method is then applied to the orthogonal matching pursuit algorithm and evaluated on a large-scale system with realistic synthetic channels. The proposed approach is proven to reduce the number of multiplications by more than two orders of magnitude compared to the classical algorithm for the same level of performance.
\bibliographystyle{IEEEtran} 
\bibliography{references} 

\begin{thebibliography}{10}
\providecommand{\url}[1]{#1}
\csname url@samestyle\endcsname
\providecommand{\newblock}{\relax}
\providecommand{\bibinfo}[2]{#2}
\providecommand{\BIBentrySTDinterwordspacing}{\spaceskip=0pt\relax}
\providecommand{\BIBentryALTinterwordstretchfactor}{4}
\providecommand{\BIBentryALTinterwordspacing}{\spaceskip=\fontdimen2\font plus
\BIBentryALTinterwordstretchfactor\fontdimen3\font minus
  \fontdimen4\font\relax}
\providecommand{\BIBforeignlanguage}[2]{{%
\expandafter\ifx\csname l@#1\endcsname\relax
\typeout{** WARNING: IEEEtran.bst: No hyphenation pattern has been}%
\typeout{** loaded for the language `#1'. Using the pattern for}%
\typeout{** the default language instead.}%
\else
\language=\csname l@#1\endcsname
\fi
#2}}
\providecommand{\BIBdecl}{\relax}
\BIBdecl

\bibitem{Rusek2013}
F.~Rusek, D.~Persson, B.~K. Lau, E.~G. Larsson, T.~L. Marzetta, O.~Edfors, and
  F.~Tufvesson, ``Scaling up {MIMO}: Opportunities and challenges with very
  large arrays,'' \emph{IEEE Signal Processing Magazine}, vol.~30, no.~1, pp.
  40--60, 2013.

\bibitem{Larsson2014}
E.~G. Larsson, O.~Edfors, F.~Tufvesson, and T.~L. Marzetta, ``Massive {MIMO}
  for next generation wireless systems,'' \emph{IEEE Communications Magazine},
  vol.~52, no.~2, pp. 186--195, 2014.

\bibitem{Heath2016}
R.~W. Heath, N.~González-Prelcic, S.~Rangan, W.~Roh, and A.~M. Sayeed, ``An
  overview of signal processing techniques for millimeter wave {MIMO}
  systems,'' \emph{IEEE Journal of Selected Topics in Signal Processing},
  vol.~10, no.~3, pp. 436--453, 2016.

\bibitem{Pati1993}
Y.~Pati, R.~Rezaiifar, and P.~Krishnaprasad, ``Orthogonal matching pursuit:
  recursive function approximation with applications to wavelet
  decomposition,'' in \emph{Proceedings of 27th Asilomar Conference on Signals,
  Systems and Computers}, 1993, pp. 40--44 vol.1.

\bibitem{Chatelier2023}
B.~Chatelier, L.~Le~Magoarou, and G.~Redieteab, ``Efficient deep unfolding for
  {SISO-OFDM} channel estimation,'' in \emph{ICC 2023 - IEEE International
  Conference on Communications}, 2023, pp. 3450--3455.

\bibitem{Jost2006}
P.~Jost, P.~Vandergheynst, and P.~Frossard, ``Tree-based pursuit: Algorithm and
  properties,'' \emph{IEEE Transactions on Signal Processing}, vol.~54, no.~12,
  pp. 4685--4697, 2006.

\bibitem{Ayremlou2014}
\BIBentryALTinterwordspacing
A.~Ayremlou, T.~A. Goldstein, A.~Veeraraghavan, and R.~Baraniuk, ``Fast
  sublinear sparse representation using shallow tree matching pursuit,''
  \emph{ArXiv}, vol. abs/1412.0680, 2014. [Online]. Available:
  \url{https://api.semanticscholar.org/CorpusID:8590846}
\BIBentrySTDinterwordspacing

\bibitem{Skretting2017}
\BIBentryALTinterwordspacing
K.~Skretting and K.~Engan, ``Sparse approximation by matching pursuit using
  shift-invariant dictionary,'' in \emph{Scandinavian Conference on Image
  Analysis}, 2017. [Online]. Available:
  \url{https://api.semanticscholar.org/CorpusID:21897612}
\BIBentrySTDinterwordspacing

\bibitem{Dorffer2019}
C.~Dorffer, C.~Herzet, and A.~Drémeau, ``Region-based relaxations to
  accelerate greedy approaches,'' in \emph{2019 27th European Signal Processing
  Conference (EUSIPCO)}, 2019, pp. 1--5.

\bibitem{palacios2022}
\BIBentryALTinterwordspacing
J.~Palacios, N.~González-Prelcic, and C.~Rusu, ``Multidimensional orthogonal
  matching pursuit: theory and application to high accuracy joint localization
  and communication at {mmWave},'' 2022. [Online]. Available:
  \url{https://arxiv.org/abs/2208.11600}
\BIBentrySTDinterwordspacing

\bibitem{Bayraktar2024}
M.~Bayraktar, N.~González-Prelcic, G.~C. Alexandropoulos, and H.~Chen,
  ``{RIS}-aided joint channel estimation and localization at {mmWave} under
  hardware impairments: A dictionary learning-based approach,'' \emph{IEEE
  Transactions on Wireless Communications}, vol.~23, no.~12, pp.
  19\,696--19\,712, 2024.

\bibitem{klaimi2026}
\BIBentryALTinterwordspacing
N.~Klaimi, C.~Elvira, P.~Mary, and L.~Le~Magoarou, ``Physically constrained
  unfolded multi-dimensional {OMP} for large {MIMO} systems,'' 2026. [Online].
  Available: \url{https://arxiv.org/abs/2601.10771}
\BIBentrySTDinterwordspacing

\bibitem{Mallat1993}
S.~Mallat and Z.~Zhang, ``Matching pursuits with time-frequency dictionaries,''
  \emph{IEEE Transactions on Signal Processing}, vol.~41, no.~12, pp.
  3397--3415, 1993.

\bibitem{Bracewell2000_Ch5}
R.~N. Bracewell, \emph{The Impulse Symbol}, 3rd~ed.\hskip 1em plus 0.5em minus
  0.4em\relax New York: McGraw-Hill, 2000, ch.~5, pp. 74--104.

\bibitem{sionna}
J.~Hoydis, S.~Cammerer, F.~{Ait Aoudia}, M.~Nimier-David, L.~Maggi, G.~Marcus,
  A.~Vem, and A.~Keller, ``Sionna,'' 2022, https://nvlabs.github.io/sionna/.

\end{thebibliography}
\vspace{12pt}

\end{document}